\documentclass[12pt, reqno]{amsart}
\usepackage{amssymb}
\usepackage{amsmath}
\usepackage{graphicx, color}
\usepackage{wrapfig,framed}

\usepackage[height=22cm, width=16.5cm, hmarginratio={1:1}]{geometry}
\usepackage{hyperref}

\theoremstyle{plain}
\newtheorem{theorem}{Theorem}

\newtheorem{lemma}{Lemma}

\theoremstyle{definition}

\newcommand{\beq}{\begin{equation}}
\newcommand{\eeq}{\end{equation}}

\newcommand{\CC}{\mathbb{C}}
\newcommand{\e}{\epsilon}
\newcommand{\p}{\partial}
\newcommand{\F}{\mathcal{F}}
\newcommand{\cH}{\mathcal{H}}
\newcommand{\bbT}{{\bf T}}
\newcommand{\bt}{{\bf t}}

\def\={\; = \;}
\def\+{\; + \;}
\def\:={\; := \; }

\begin{document}

\title[Okuyama--Sakai conjecture]{On a new proof of the Okuyama--Sakai conjecture}
\author{Di Yang, Qingsheng Zhang}

\begin{abstract}
In~\cite{OS} Okuyama and Sakai gave a conjectural equality for the higher genus 
generalized Br\'ezin--Gross--Witten (BGW) free  energies. 
In a recent work~\cite{YZ22} we established the Hodge-BGW correspondence on the relationship between 
certain special cubic Hodge integrals and the generalized BGW correlators, and 
a proof of the Okuyama--Sakai conjecture was also given {\it ibid}. In this paper, 
we give a new proof of the Okuyama--Sakai conjecture by a further application of the Dubrovin--Zhang theory for 
the KdV hierarchy.  
\end{abstract}

\maketitle
\tableofcontents

\section{Introduction}
The Br\'ezin--Gross--Witten (BGW) model was introduced in~\cite{BG,GW}:
\beq
Z_{\rm BGW}(A,A^{\dagger};\hbar)\sim \int\,[dU]\,e^{\frac{1}{\hbar^2}\,{\rm tr}(A^{\dagger}U\,+\,AU^{\dagger})}\, ,
\eeq
where the integration is over $M\times M$ unitary matrices with Haar measure $[dU]$ and $A$ is an $M\times M$ complex matrix.
In~\cite{MMS}, a one parameter deformation of $Z_{\rm BGW}$
was given via a generalized Kontsevich model~\cite{A, BR, KMMM,K}: 
\beq
Z_{\rm gBGW}(N,{\bf T};\hbar) \sim \int\, [d\Phi]\, e^{\frac{1}{\hbar^2}\, {\rm tr}\big(\Lambda^2\,\Phi\,+\,\frac{1}{\Phi}\,+\,(N\,-\,M)\,\hbar^2\,\log\,\Phi\big)}\, ,
\eeq
where ${\bf T}=(T_1,T_3,T_5,\dots)$,
$T_{2k+1}:= (2k-1)!!\, {\rm tr}\Lambda^{-2k-1}$, $k\ge0$, and $N$ is an indeterminate.
The normalization constant for $Z_{\rm gBGW}(N,{\bf T};\hbar)$ is often made so that
\beq\label{Zgbgw1}
Z_{\rm gBGW}(N, {\bf 0}; \hbar) \;\equiv\; 1 \,. 
\eeq
We call $Z_{\rm gBGW}(N,{\bf T};\hbar)$ the {\it normalized generalized BGW partition function with the parameter~$N$}.
The logarithm $\log Z_{\rm gBGW}(N, {\bf T}; \hbar) =: \F_{\rm gBGW}(N, {\bf T}; \hbar)$, belonging~\cite{A} to $\CC[\hbar][[N,{\bf T}]]$,  
is called the {\it normalized generalized BGW free energy with the parameter~$N$}.

Following Alexandrov~\cite{A},  introduce 
\beq
x \= N\, \hbar \,\sqrt{-2}\,.
\eeq
Then the free energy $\F_{\rm gBGW}(N, {\bf T}; \hbar)$ has the genus expansion~\cite{A}:
\beq\label{genusbgsalex}
\F_{\rm gBGW}\biggl(\frac{x}{\hbar\,\sqrt{-2}}, {\bf T}; \hbar\biggr) 
\;=:\; \sum_{g\geq 0}\hbar^{2g-2}\,\mathcal F^{\rm gBGW}_{g}(x, {\bf T}).
\eeq
We call $\F_{\rm gBGW}\bigl(\frac{x}{\hbar\,\sqrt{-2}}, {\bf T}; \hbar\bigr)$ the {\it normalized generalized BGW free energy}, 
$\mathcal F^{\rm gBGW}_{g}(x, {\bf T})$ its {\it genus~$g$ part}, and 
$Z_{\rm gBGW}\bigl(\frac{x}{\hbar\,\sqrt{-2}}, {\bf T}; \hbar\bigr)$ the {\it normalized generalized BGW partition function}.

According to~\cite{A} (see also~\cite{BR, GN, MMS}), the normalized generalized BGW partition function 
$Z_{\rm gBGW}\bigl(\frac{x}{\hbar\,\sqrt{-2}}, {\bf T}; \hbar\bigr)$ satisfies the Virasoro constraints, 
leading to the topological recursion of the Chekhov--Eynard--Orantin type \cite{A, CGG, DN, WYZ}
for the computation of the corresponding connected correlators.
It is also known that the normalized generalized BGW partition function 
is a particular tau-function for the Korteweg--de Vries (KdV) hierarchy (see e.g.~\cite{BR,MMS}). 
This enables one to apply theories of tau-functions for the KdV hierarchy 
to the study of $Z_{\rm gBGW}\bigl(\frac{x}{\hbar\,\sqrt{-2}}, {\bf T}; \hbar\bigr)$. 
Recall that the matrix-resolvent method~\cite{BDY, DYZ} gives the explicit formulae for the generating series of the logarithmic derivatives 
of an arbitrary KdV tau-function; using this method, explicit formulae for the generating series of the 
$n$-point generalized BGW correlators were 
obtained~\cite{DYZ} (see also~\cite{BR} for other proofs for the explicit formulae).
Recalling also that the KdV hierarchy is a reduction of the Kadomtsev--Petviashvili (KP) hierarchy, 
one can interprete $Z_{\rm gBGW}\bigl(\frac{x}{\hbar\,\sqrt{-2}}, {\bf T}; \hbar\bigr)$ as a point  
in the Sato Grassmannian for the KP hierarchy; in particular, the corresponding affine coordinates were calculated out~\cite{DYZ, Fu, Zhoudessin}.
The KdV hierarchy can also be viewed as a reduction of the BKP hierarchy~\cite{A2}, and 
the BKP affine coordinates for $Z_{\rm gBGW}\bigl(\frac{x}{\hbar\,\sqrt{-2}}, {\bf T}; \hbar\bigr)$ were given in~\cite{WY, WYZ}.
A new formula for $Z_{\rm gBGW}$ based on Virasoro constraints and the KdV/BKP theory  
was recently obtained in~\cite{A1, A0, LY}.

Another important theory of tau-functions for the KdV hierarchy was partially motivated from the 
quantum gravity and topological field theories~\cite{DVV, DW, Du1, W},  
and was systematically developed by Dubrovin and Zhang~\cite{DZ-norm} in the framework 
normal forms of evolutionary PDEs. In our previous work~\cite{YZ22}, we applied this theory from viewpoints of 
Virasoro constraints, and found the {\it Hodge-BGW correspondence} (for details about the 
Hodge-BGW correspondence see Section~\ref{section2} below). In particular, by using the Hodge-BGW correspondence and 
by deriving the loop equations 
we proved~\cite{YZ22} a conjecture of Okuyama and Sakai~\cite{OS}. 

\smallskip

\noindent {\bf The Okuyama--Sakai Conjecture}~(\cite{OS}).
{\it	Define a power series $y(x,{\bf T})\in\CC[[x+2]][[{\bf T}]]$ by 
\beq
y(x,{\bf T})\=\frac{\p^2 \F^{\rm gBGW}_0(x,{\bf T})}{\p T_1^2}. 
\eeq
For every $g\geq 1$, the genus g part of the generalized BGW free energy satisfy the identity: for $g=1$,
	\begin{align}
	& \F^{\rm gBGW}_{1}(x,{\bf T}) \= \frac1{24} \log \bigg(\frac{\p y(x,{\bf T})}{\p T_1}\bigg) \,-\, \frac{\log 2}{24} 
	\,-\, \frac1{12} \, \log\Bigl(-\frac{x}{2}\Bigr) \,, \label{eqn:bgwf1-wkjet}
	\end{align}
	and for $g\geq 2$, 
	\beq\label{eqn:kw-bgwjet}
	\F^{\rm gBGW}_{g}(x,{\bf T}) \= F^{\rm WK}_g\bigg(\frac{\p y(x,{\bf T})}{\p T_1},\dots,\frac{\p^{3g-2} y(x,{\bf T})}{\p T_1^{3g-2}}\bigg)\,
-\,\frac{1}{x^{2g-2}}\frac{(-1)^g \, 2^{g-1} \, B_{2g}}{2g \, (2g-2)}\,.
	\eeq
Here, $B_k$ denotes the $k$th Bernoulli number, and  $F^{\rm WK}_g(z_1,\dots,z_{3g-2})$, $g\ge2$, are 
certain specific functions of $(3g-2)$ variables (see \eqref{jstwk1-1}, \eqref{jstwk1-2}, \eqref{jstwk2} for the definitions). 
}

\smallskip
We recall that in~\cite{OS} the above conjectural identities \eqref{eqn:bgwf1-wkjet}, \eqref{eqn:kw-bgwjet} were verified for $g=1,2$, and were also checked~\cite{OS} for special evaluations up to $g=20$.

The goal of this paper is to
give a new proof of the Okuyama--Sakai conjecture 
by using the Hodge-BGW correspondence, and by considering a further application of the 
Dubrovin--Zhang theory~\cite{DZ-norm} (see also~\cite{Y23, YangZhou}) to the KdV hierarchy which can be  
interpreted like in~\cite{YangZhou} for the Laguerre Unitary Ensemble case (see also~\cite{Dubrovinuniversal}) as a universality class of criticality in the renormalization theory 
of quantum field theories.

\begin{theorem}\label{OSthm}
The Okuyama--Sakai conjecture holds. 
\end{theorem}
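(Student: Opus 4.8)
The strategy is to reduce the Okuyama--Sakai identities to a statement purely about the KdV hierarchy and the Dubrovin--Zhang normal-form theory, using the Hodge-BGW correspondence as the bridge. First I would recall from Section~\ref{section2} the precise form of the Hodge-BGW correspondence established in~\cite{YZ22}: it identifies the generalized BGW free energy $\F^{\rm gBGW}(x,{\bf T})$ with a generating function of certain special cubic Hodge integrals, after a suitable change of variables and a dilaton-type shift. The point of passing through the Hodge side is that the Hodge partition function is, up to a calibrated normalization, a tau-function of the KdV hierarchy whose genus expansion is governed by the Dubrovin--Zhang loop equation; the Witten--Kontsevich (WK) tau-function is the particular KdV tau-function obtained at the special point ${\bf T}$-independent of the Hodge parameters. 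Thus both sides of \eqref{eqn:bgwf1-wkjet} and \eqref{eqn:kw-bgwjet} are, after the correspondence, expressed in terms of jet-variables of one and the same KdV solution $y$.

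\smallskip

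Next I would set up the genus-by-genus comparison. Recall that in the Dubrovin--Zhang theory for KdV, the genus~$g\ge 2$ free energy of \emph{any} calibrated KdV tau-function is a universal differential polynomial $\mathcal{F}_g$ in the jets $\p_{T_1}^k y$, $k=1,\dots,3g-2$, of the genus-zero two-point function $y=\p_{T_1}^2\F_0$; these $\mathcal{F}_g$ are exactly the functions that, for the WK tau-function, reduce to $F^{\rm WK}_g(z_1,\dots,z_{3g-2})$ as defined in \eqref{jstwk1-1}, \eqref{jstwk1-2}, \eqref{jstwk2}. The crucial input is the \emph{uniqueness} part of the normal-form theory: the solution of the loop equation with the correct genus-zero data and the appropriate quasi-triviality ansatz is unique up to the addition of constants (in ${\bf T}$) at each genus. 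Consequently, for $g\ge 2$,
\begin{equation}
\F^{\rm gBGW}_g(x,{\bf T}) \;-\; F^{\rm WK}_g\bigg(\frac{\p y}{\p T_1},\dots,\frac{\p^{3g-2} y}{\p T_1^{3g-2}}\bigg)
\end{equation}
must be independent of $T_1, T_3, T_5,\dots$, hence a function of $x$ alone; and similarly for $g=1$ the difference $\F^{\rm gBGW}_1 - \tfrac1{24}\log(\p y/\p T_1)$ is a function of $x$ alone (the extra $\log$ term accounting for the genus-one anomaly in the quasi-triviality transformation). It then remains only to pin down these ${\bf T}$-independent functions of $x$.

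\smallskip

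To identify the $x$-dependent constants I would restrict to ${\bf T}={\bf 0}$. At ${\bf 0}$ the generalized BGW free energy $\F^{\rm gBGW}_g(x,{\bf 0})$ is known in closed form from the matrix-resolvent or topological-recursion computation (equivalently from the explicit one-point function of~\cite{DYZ, BR}); on the WK side, the specialization $y(x,{\bf 0})$ and its $T_1$-derivatives at ${\bf T}={\bf 0}$ are likewise explicit, so $F^{\rm WK}_g$ evaluated there can be computed. Matching the two yields precisely the Bernoulli-number term $-\frac{(-1)^g 2^{g-1} B_{2g}}{x^{2g-2}\,2g(2g-2)}$ for $g\ge 2$ and the term $-\frac{\log 2}{24}-\frac1{12}\log(-x/2)$ for $g=1$; this is a finite residue/asymptotic computation of the type already performed in~\cite{OS} for low genus, but here it is needed only at the single point ${\bf T}={\bf 0}$ rather than as an identity in all the $T_{2k+1}$.

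\smallskip

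\emph{The main obstacle.} The delicate step is not the abstract uniqueness argument but the \emph{bookkeeping of normalizations} in passing through the Hodge-BGW correspondence: one must track carefully the factor $x=N\hbar\sqrt{-2}$, the dilaton shift relating the BGW times ${\bf T}$ to the Witten--Kontsevich times, and the genus-one and genus-$g$ anomalies in the Dubrovin--Zhang quasi-triviality transformation, so that the universal differential polynomials on the two sides are literally the same $F^{\rm WK}_g$ and not some rescaled variant. Getting every power of $2$, every sign $\sqrt{-2}$, and the shift consistent is what produces the exact constants $-\log 2/24$, $-\tfrac1{12}\log(-x/2)$, and the Bernoulli term; I expect this reconciliation of conventions — rather than any deep new estimate — to be where the real work lies.
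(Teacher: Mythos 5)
Your overall architecture (pass to the Dubrovin--Zhang picture, argue that $\F^{\rm gBGW}_g$ and $F^{\rm WK}_g$ evaluated on the jets of $y$ can differ only by a $\bf T$-independent function of $x$, then pin that function down) is the right one, and the first two steps can be made rigorous: the paper does this by showing that $U:=y+\sum_g\hbar^{2g}\p_{T_1}^2F_g^{\rm WK}(\dots)$ and $U_{\rm gBGW}:=\hbar^2\p_{T_1}^2\F$ both solve KdV with the same initial datum $\frac{1}{(1-T_1)^2}\bigl(\frac{\hbar^2}{8}+\frac{x^2}{4}\bigr)$, and then uses the string equation $L_0$ to cut the resulting affine ambiguity down to a function of $x$ alone. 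Your appeal to ``uniqueness of the solution of the loop equation'' is weaker than this, since it presupposes that the generalized BGW free energies satisfy the DZ loop equation with the correct genus-zero data --- that is essentially the route of the earlier proof in~\cite{YZ22}, and you would still need to supply that input; but this part is repairable.

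The genuine gap is in your last step. Because of the normalization $Z_{\rm gBGW}(N,{\bf 0};\hbar)\equiv 1$, evaluating at ${\bf T}={\bf 0}$ reduces the identification of the constants to the closed-form evaluation
\begin{equation*}
F^{\rm WK}_g\Bigl(\tfrac{2!\,x^2}{4},\tfrac{3!\,x^2}{4},\dots,\tfrac{(3g-1)!\,x^2}{4}\Bigr)\;=\;\frac{(-1)^g\,2^{g-1}B_{2g}}{x^{2g-2}\,2g\,(2g-2)}\qquad(g\ge2),
\end{equation*}
using $\p_{T_1}^ky|_{{\bf T}={\bf 0}}=\frac{(k+1)!\,x^2}{4}$. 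This is not a ``finite residue computation at a single point'': it is an identity for every $g$, it is exactly the ${\bf T}={\bf 0}$ case of the conjecture you are trying to prove, and it is precisely what Okuyama--Sakai could only verify up to $g=20$. You give no mechanism for proving it in general, so your argument is circular at this point. The paper avoids any such evaluation by a structural argument: the dilaton operator $L_{\rm dilaton}$ annihilates both $Z$ and $\tau$, forcing the discrepancy $\mathcal K_g(x)$ to equal $c_g/x^{2g-2}$; simultaneously $\mathcal K_g$ inherits a jet-variable representation $K_g\bigl(y,\p_{T_1}y,\dots\bigr)$ from \eqref{jetfgT1} and \eqref{defitau10}, and differentiating in $T_{2m+1}$, dividing by $y^m$, and comparing coefficients of powers of $m$ shows all first derivatives of $K_g$ vanish, so $\mathcal K_g$ is an $x$-independent constant; compatibility of the two descriptions then gives $c_g=0$. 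You would need either to adopt such a structural argument or to actually prove the displayed evaluation for all $g$ for your proposal to close.
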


The rest of the paper is organized as follows. In Section~\ref{section2} we give a review of the Hodge-BGW correspondence. 
In Section~\ref{section3} we prove Theorem~\ref{OSthm}. 

\medskip

\noindent {\bf Acknowledgements.} 
The work was partially supported by 
the CAS Project for Young Scientists in Basic Research No.~YSBR-032 and by the NSFC No.~12061131014.

\section{Review of the Hodge-BGW correspondence}\label{section2}
Let~$\overline{\mathcal{M}}_{g,n}$ denote the Deligne--Mumford moduli 
space of stable algebraic curves of genus~$g$ with~$n$ distinct marked points~\cite{DM}. 
Let $\mathcal L_p$ be the $p$th tautological line bundle on~$\overline{\mathcal{M}}_{g,n}$,  and $\mathbb{E}_{g,n}$ the Hodge bundle.
Denote by $\psi_p:=c_1(\mathcal L_p)$, $p=1,\dots,n$, the first Chern class of $\mathcal L_p$, 
and by  $\lambda_j:= c_j(\mathbb{E}_{g,n})$, $j=0, \dots, g$, the $j$th Chern class of $\mathbb E_{g,n}$. 
The {\it Hodge integrals} are defined as the following {\it intersection numbers of mixed $\psi$-, $\lambda$-classes} 
on~$\overline{\mathcal{M}}_{g,n}$:
\beq\label{hodgekappaint}
\int_{\overline{\mathcal{M}}_{g,n}} \, \psi_1^{i_1}\cdots \psi_n^{i_n} \, \lambda_1^{j_1} \cdots \lambda_g^{j_g} \,,
\eeq
where $ i_1, \dots, i_n, j_1, \dots, j_g \geq 0$.
These integrals vanish unless
\beq\label{ddhodge}
(i_1\+\cdots\+ i_n) \+ (j_1\+2 \, j_2\+\cdots\+g \, j_g)\= 3 \, g \,-\, 3 \+ n \,. 
\eeq
For the case when all $j$'s are taken to be zero, the Hodge integrals~\eqref{hodgekappaint} are famously known as the 
intersection numbers of $\psi$-classes, which, according to Witten's conjecture~\cite{W} and Kontsevich's proof~\cite{K},
 have an explicit connection to the KdV hierarchy (cf.~e.g.~\cite{Dickey, DYZ, DZ-norm, K, W}). 

Let ${\rm ch}_k(\mathbb{E}_{g,n})$, $k\ge0$, be the components of the Chern character of~$\mathbb{E}_{g,n}$. 
According to Mumford~\cite{Mum}, the odd components of the Chern character of~$\mathbb{E}_{g,n}$ vanish. 
Denote by 
\beq\label{hodgeint}
\cH({\bf t};{\boldsymbol{\sigma}}; \e) \:= \sum_{g,n\geq0} \,\e^{2g-2}\, 
\sum_{i_1,\dots,i_n\geq0}\, \frac{t_{i_1}\cdots t_{i_n}}{n!} \, \int_{\overline{\mathcal{M}}_{g,n}}  
\psi_1^{i_1} \cdots \psi_n^{i_n} \cdot \exp\biggl(\sum_{j\ge1} \sigma_{2j-1} {\rm ch}_{2j-1}(\mathbb{E}_{g,n})\biggr)
\eeq
the generating series of Hodge integrals, called 
the {\it Hodge free energy}, 
and by~$\cH_g({\bf t}; {\boldsymbol{\sigma}})$ its genus~$g$ part, i.e.,
\beq
\cH({\bf t}; {\boldsymbol{\sigma}}; \e) \= \sum_{g\geq0} \, \e^{2g-2} \, \cH_g({\bf t}; {\boldsymbol{\sigma}})\,.
\eeq
Denote also 
\beq\label{hodgepart}
Z_{\rm H}({\bf t}; {\boldsymbol{\sigma}}; \e) \:= e^{\cH({\bf t}; {\boldsymbol{\sigma}}; \e)}.
\eeq
We call $Z_{\rm H}({\bf t}; {\boldsymbol{\sigma}}; \e)$ the {\it Hodge partition function}. The specialization  
\beq\label{wkpart}
Z_{\rm H}({\bf t}; {\bf 0}; \e)=:Z_{\rm WK} ({\bf t};\e), \quad  
\cH({\bf t}; {\bf 0}; \e) =: \F^{\rm WK} ({\bf t};\e) =: \sum_{g\ge0} \e^{2g-2} \F^{\rm WK}_g ({\bf t})
\eeq
are called the {\it Witten--Kontsevich partition function}, the {\it Witten--Kontsevich free energy}, respectively. 

We will be particularly interested in the Hodge integrals of the following form:
\beq\label{chbcy}
\int_{\overline{\mathcal{M}}_{g,n}} \, 
\Lambda_g(-1)^2 \, \Lambda_g\Bigl(\frac{1}{2}\Bigr) \, \psi_1^{i_1} \cdots \psi_n^{i_n} \,, 
\eeq
where $\Lambda_g(z) := \sum_{j=0}^g \lambda_j  z^j$ denotes the Chern polynomial of~$\mathbb E_{g,n}$.
Significance of the Hodge integrals in~\eqref{chbcy} 
was manifested by the Hodge-GUE correspondence~\cite{DLYZ2, DY1} (see also~\cite{DLYZ1}), by 
the Gopakumar--Mari\~no--Vafa conjecture on the string/Chern--Simons duality~\cite{GV, LLZ1,LLZ2,MV,OP}, 
by the Hodge-BGW correspondence~\cite{YZ22}, and etc. 
Generating series of these Hodge integrals can be obtained from~\eqref{hodgeint} by specializing the 
parameters $\sigma$'s as follows:
\beq
\sigma_{2j-1} \= - 2\, (1-4^{-j}) \, (2j-2)! , \quad j\geq1
\eeq
(see e.g.~\cite{DLYZ1}).  Denote 
\begin{align}
& \cH_{\rm special}({\bf t};\e) 
:= \cH \bigl(\bt; \{- 2\, (1-4^{-j}) \, (2j-2)! \} ; \e\bigr), \label{cubichodgeint} \\
& Z_{\rm special}({\bf t};\e) \:= Z_H\bigl({\bf t}; \{- 2\, (1-4^{-j}) \, (2j-2)! \};\e\bigr).
\end{align}
We call $\cH_{\rm special}({\bf t};\e)$ (and $Z_{\rm special}({\bf t};\e)$) 
the {\it Hodge free energy} (and respectively {\it Hodge partition function}) {\it associated to $\Lambda_g(-1)^2\Lambda_g(1/2)$}. 
We also denote by $\cH^{\rm special}_{g}({\bf t})$ the genus~$g$ part of the 
Hodge free energy associated to $\Lambda_g(-1)^2\Lambda_g(1/2)$, i.e.,
\beq
\cH_{\rm special}({\bf t};\e) \= \sum_{g\geq0} \, \e^{2g-2} \, \cH^{\rm special}_{g}({\bf t})\,.
\eeq

In~\cite{YZ22} an explicit relationship (see the following theorem) between the Hodge partition function associated to $\Lambda(-1)^2\Lambda(1/2)$ 
and the generalized BGW partition function, called the {\it Hodge-BGW correspondence}, was established. 
As in~\cite{YZ22}, define
\beq \label{def:gbgw-part}
\F(x, \bbT; \hbar) \:= \F_{\rm gBGW}\Bigl(\frac{x}{\hbar \,\sqrt{-2}}, \bbT; \hbar \Bigr) \+ B(x,\hbar)\,,
\eeq
where 
\beq \label{eqn:bernoulli}
B(x, \hbar) \= \frac1{\hbar^2} \biggl(\frac{x^2}{4} \, \log \Bigl(-\frac{x}{2}\Bigr) - \frac38 \, x^2\biggr) \+ \frac1{12} \, \log\Bigl(-\frac{x}{2}\Bigr)
\+ \sum_{g\geq 2} \frac{\hbar^{2g-2}}{x^{2g-2}}\frac{(-1)^g \, 2^{g-1} \, B_{2g}}{2g \, (2g-2)}
\eeq
with $B_k$ denoting the $k$th Bernoulli number. 
We call $\F(x, {\bf T}; \hbar)$  
the {\it generalized BGW free energy}, and its exponential 
\beq\label{def:Z}
\exp(\F(x, {\bf T}; \hbar)) =: Z(x, {\bf T}; \hbar)
\eeq
the {\it generalized BGW partition function}.
The genus expansion~\eqref{genusbgsalex} implies the genus expansion
\beq\label{eqn:genus-expan}
\F(x, {\bf T};\hbar) \;=:\; \sum_{g\ge0} \hbar^{2g-2}\, \F_g(x, {\bf T}),
\eeq
and we call $\F_g(x, {\bf T})$ the {\it genus $g$ part of the generalized BGW free energy}, for short the {\it genus $g$ generalized BGW free energy}.
We are ready to state the Hodge-BGW correspondence.
\begin{theorem}[Hodge-BGW correspondence~\cite{YZ22}]\label{HodgeBGWthm}
The following identity 
\begin{align}\label{mainid}
e^{\frac{A( x,\bbT)}{\hbar^2}} Z_{\rm special}\bigl({\bf t}( x, \bbT); \hbar\,\sqrt{-4}  \bigr) \=  Z( x, \bbT; \hbar) 
\end{align}
holds true in $\CC((\hbar^2))[[x+2]][[{\bf T}]]$.
Here,
\begin{align}\label{deftT}
&  t_i( x, \bbT) \= \delta_{i,0} \, x \+\delta_{i,1}\, -\, \Bigl(-\frac{1}{2}\Bigr)^{i-1}\,-\, 2\, \sum_{a\geq0} \,\Bigl(-\frac{2a+1}{2}\Bigr)^i  \;  \frac{ {T}_{2a+1} }{a!} \, , \quad i\geq 0\,,
\end{align}
and $A(x, \bbT)$ is a quadratic series given by
\beq\label{defA1215}
A(x,\bbT) \= \frac{1}{2} \, \sum_{a,b\geq0}
\, \frac{\widetilde { T}_{2a+1} \,\widetilde { T}_{2b+1}}{a! \, b! \, (a+b+1)} 
\, - \, \sum_{b\geq0} \, \frac{ x \, \widetilde { T}_{2b+1}}{b!\, (2b+1)}  \,,
\eeq
with $\widetilde{T}_{2a+1}  \=   T_{2a+1} \,-\, \delta_{a,0} $, $a\geq0$.
\end{theorem}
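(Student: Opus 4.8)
The plan is to derive the conjecture from the universality of the Dubrovin--Zhang jet representation for the KdV hierarchy, using the Hodge-BGW correspondence (Theorem~\ref{HodgeBGWthm}) to certify that the generalized BGW free energy is presented in the canonical (topological) normalization. First I would pass from the normalized energies $\F^{\rm gBGW}_g$ to the corrected energies $\F_g$ of \eqref{def:gbgw-part}--\eqref{eqn:genus-expan}. Since the genus-$0$, genus-$1$ and genus-$g$ ($g\ge2$) parts of the series $B(x,\hbar)$ in \eqref{eqn:bernoulli} are, respectively, independent of $T_1$, equal to $\frac1{12}\log(-x/2)$, and equal to $\frac1{x^{2g-2}}\frac{(-1)^g2^{g-1}B_{2g}}{2g(2g-2)}$, and since the added genus-$0$ term is $T_1$-independent so that $\p_{T_1}^2\F_0=\p_{T_1}^2\F^{\rm gBGW}_0=y$, the two target identities \eqref{eqn:bgwf1-wkjet}, \eqref{eqn:kw-bgwjet} are equivalent to the single clean family
\[
\F_1 \= \frac1{24}\log\Bigl(\frac{\p y}{\p T_1}\Bigr)\,-\,\frac{\log 2}{24},
\qquad
\F_g \= F^{\rm WK}_g\Bigl(\frac{\p y}{\p T_1},\dots,\frac{\p^{3g-2}y}{\p T_1^{3g-2}}\Bigr)\quad(g\ge2).
\]
It therefore suffices to prove this family; the Bernoulli terms reappear automatically on passing back to $\F^{\rm gBGW}_g$.

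Next I would recall the Dubrovin--Zhang structure. By their definitions \eqref{jstwk1-1}--\eqref{jstwk2}, the functions $F^{\rm WK}_g$ are exactly the universal jet polynomials of the quasi-triviality (loop-equation) construction for the KdV hierarchy: for Witten--Kontsevich one has $\F^{\rm WK}_g=F^{\rm WK}_g(\p_{t_0}u,\dots,\p_{t_0}^{3g-2}u)$ for $g\ge2$ and $\F^{\rm WK}_1=\frac1{24}\log\p_{t_0}u$, with $u=\p_{t_0}^2\F^{\rm WK}_0$ and $t_0$ the spatial variable. The crucial feature is that these polynomials are fixed by the KdV loop equation alone, so they are the \emph{same} for every KdV tau-function of topological type: for any such tau-function, with first time serving as spatial variable $x$ and genus-$0$ two-point function $v$, the canonical free energy obeys $\mathcal F_g=F^{\rm WK}_g(\p_x v,\dots,\p_x^{3g-2}v)$ for $g\ge2$, while $\mathcal F_1=\frac1{24}\log\p_x v+\mathrm{const}$; the additive constant in genus $1$ is the only genus-$\ge1$ ambiguity, as the tau-function normalization can alter only the genus-$0$ (affine-quadratic) data and this one constant.

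The heart of the argument is to certify that $Z=e^{\F}$ is such a tau-function, with spatial variable $T_1$, genus-$0$ two-point function $y$, and a dispersion normalization matching that of WK, so that the very polynomials $F^{\rm WK}_g$ occur. That the generalized BGW partition function is a KdV tau-function is classical (recalled in the introduction), with the first KdV time corresponding to $T_1$ and $x$ entering only as a parameter of the genus-$0$ solution $y\in\CC[[x+2]][[\bbT]]$; what Theorem~\ref{HodgeBGWthm} supplies is the precise normalization. Identity \eqref{mainid} writes $\F$ as $A(x,\bbT)/\hbar^2$ plus $\cH_{\rm special}$ at the times \eqref{deftT} and at $\e=\hbar\sqrt{-4}$, so that $\F_g=(-4)^{g-1}\,\cH^{\rm special}_g({\bf t}(x,\bbT))$ for $g\ge1$ (the purely genus-$0$ prefactor $A/\hbar^2$ drops out). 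Since $\cH^{\rm special}_g$ is a genuine generating series of Hodge integrals, it carries the canonical topological normalization with no spurious linear-in-times terms; transported through \eqref{mainid}, this certifies that $\F$ sits in the canonical Dubrovin--Zhang normalization in which the jet identities hold on the nose for $g\ge2$, and it simultaneously records the dispersion rescaling $\hbar\mapsto\hbar\sqrt{-4}$ that must be reconciled with the WK polynomials. Applying the universality of the previous paragraph then yields the clean family for $g\ge2$, hence \eqref{eqn:kw-bgwjet}.

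The step I expect to be the main obstacle is this normalization and scaling bookkeeping. I would have to reconcile the factor $\e=\hbar\sqrt{-4}$ (equivalently the genus-wise $(-4)^{g-1}$) with the rescaling between the WK spatial derivative $\p_{t_0}$ and the BGW spatial derivative $\p_{T_1}$ induced by \eqref{deftT}, and show that these cancel so that each $F^{\rm WK}_g$ is reproduced with unit coefficients and undeformed jet arguments; any residual scaling would rescale the arguments and spoil the identity. I would also check, in the formal ring $\CC[[x+2]][[\bbT]]$, that $\p_{T_1}y$ is invertible, so that $\log\p_{T_1}y$ and the rational-in-$\p_{T_1}y$ coefficients of the $F^{\rm WK}_g$ are well defined. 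The genus-$1$ constant is then pinned by evaluating $\F_1-\frac1{24}\log\p_{T_1}y$ at $\bbT=\mathbf 0$ as a series in $\CC[[x+2]]$, where it must reduce to the single value $-\frac{\log 2}{24}$, the $\log 2$ being forced precisely by the $\sqrt{-4}$ mismatch between the WK and BGW dispersion normalizations. Granting the universality of the KdV jet polynomials, the whole argument reduces to this bookkeeping; the conceptual content is simply the transfer of the one family $F^{\rm WK}_g$ from the Witten--Kontsevich solution to the generalized BGW solution through Theorem~\ref{HodgeBGWthm}.
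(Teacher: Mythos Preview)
Your proposal does not address the stated theorem. The statement you were asked to prove is Theorem~\ref{HodgeBGWthm}, the Hodge--BGW correspondence, i.e., the identity~\eqref{mainid} relating $Z_{\rm special}$ and $Z$. You do not prove this; on the contrary, in your very first sentence you \emph{assume} Theorem~\ref{HodgeBGWthm} and use it as an input (``using the Hodge--BGW correspondence (Theorem~\ref{HodgeBGWthm}) to certify\dots''). What you actually sketch is an argument for the Okuyama--Sakai conjecture, i.e., Theorem~\ref{OSthm} and identities~\eqref{eqn:bgwf1-wkjet}--\eqref{eqn:kw-bgwjet}. Note also that the present paper does not contain a proof of Theorem~\ref{HodgeBGWthm}: it is quoted from~\cite{YZ22}, so there is no ``paper's own proof'' of this statement to compare against.

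If your intent was in fact to prove Theorem~\ref{OSthm}, then your outline is in the same spirit as the paper's proof (both rest on the Dubrovin--Zhang theory for KdV and on Theorem~\ref{HodgeBGWthm}), but it leaves the decisive step unperformed. You appeal to a ``universality'' principle asserting that the jet polynomials $F^{\rm WK}_g$ are the same for \emph{every} KdV tau-function of topological type with $T_1$ as spatial variable, and you explicitly flag the scaling/normalization bookkeeping as ``the main obstacle'' without carrying it out. The paper does not rely on such a blanket universality claim; instead it \emph{constructs} the candidate tau-function $\tau$ via~\eqref{defitau10}, verifies that $\tau$ satisfies the same $L_0$ and $L_{\rm dilaton}$ constraints as $Z$ (using Lemma~\ref{lem:eqns-wk}), matches the KdV initial data at $T_3=T_5=\cdots=0$ to conclude $U=U_{\rm gBGW}$, and then eliminates the residual $x$-dependent constants $\mathcal K_g(x)$ by the jet argument~\eqref{kkequal}--\eqref{vanishingfirstderiv}. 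Your sketch contains no analogue of the initial-value matching, no proof that the possible additive function of $x$ in each genus actually vanishes, and no execution of the $\e=\hbar\sqrt{-4}$ rescaling you yourself identify as the crux; as written it is a plan rather than a proof.
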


We note that Norbury~\cite{N} also gave a conjectural topological interpretation for $\log Z_{\rm BGW}$ 
(proved recently by Chidambaram, Garcia-Failde and Giacchetto~\cite{CGG}), and 
that Kazarian and Norbury~\cite{KN} gave another new topological interpretation 
(which is conjectured to be equivalent to the construction of~\cite{N}).

It is known that (cf.~\cite{DW, DLYZ1, DY2, EYY,GJV, IZ, ZZ}) 
for any $g\ge1$, the genus $g$ part of the Hodge free energy $\cH_g({\bf t}; {\boldsymbol{\sigma}})$ admits the  
jet-variable representation, i.e., for $g=1$, 
\beq
\cH_1({\bf t}; {\boldsymbol{\sigma}}) = H_1\biggl(v(\bt), \frac{\p v(\bt)}{\p t_0}; {\boldsymbol{\sigma}}\biggr), 
\quad H_1(z,z_1) := \frac1{24} \log z_1 +\frac{\sigma_1}{24}z \label{jetgenus1}
\eeq
and for $g\ge2$, there exists a unique function $H_g(z_1, \dots, z_{3g-2}; {\boldsymbol{\sigma}})$ of $(3g-2)$ variables, such that 
\beq\label{jetgenusg}
\cH_g({\bf t}; {\boldsymbol{\sigma}}) = H_g\biggl(\frac{\p v(\bt)}{\p t_0}, \dots, \frac{\p^{3g-2} v(\bt)}{\p t_0^{3g-2}}; {\boldsymbol{\sigma}}\biggr), \quad g\geq2.
\eeq
Here $v({\bf t})=\p_{t_0}^2 \F^{\rm WK}_0(\bf t)$.
For the special case when all ${\boldsymbol{\sigma}}={\bf 0}$, we know that  
\begin{align}
&\F^{\rm WK}_1({\bf t}) = F^{\rm WK}_1\biggl(\frac{\p v(\bt)}{\p t_0}\biggr),  \label{jstwk1-1}\\
&F^{\rm WK}_1(z_1) := \frac1{24} \log z_1\,, \label{jstwk1-2}
\end{align}
and that for $g\ge2$, there exists a unique function 
$F^{\rm WK}_g(z_1, \dots, z_{3g-2})$ of $(3g-2)$ variables, such that 
\beq\label{jstwk2}
\F^{\rm WK}_g({\bf t}) = 
F^{\rm WK}_g\biggl(\frac{\p v(\bt)}{\p t_0}, \dots, \frac{\p^{3g-2} v(\bt)}{\p t_0^{3g-2}}\biggr), \quad g\geq2.
\eeq
The unique functions $F^{\rm WK}_g(z_1, \dots, z_{3g-2})$ are the ones that appear in the Okuyama--Sakai conjecture. 

It follows from the jet-variable representation of $\cH^{\rm special}_{g}(\bt)$ (cf.~\eqref{jetgenus1}--\eqref{jetgenusg}) and Theorem~\ref{HodgeBGWthm} 
that the genus $g$ generalized BGW free energy $\F_g(x, {\bf T})$ for $g\ge1$ has the jet-variable representation: there exist functions 
$F_g(z_0, z_1,\dots,z_{3g-2})$, $g\ge1$, which for $g\ge2$ belong to $\CC[z_0^{\pm},z_1^{\pm1}][z_2,\dots,z_{3g-2}]$, such that 
\beq\label{jetfgx}
\F_g(x, {\bf T}) \= F_g \biggl( u(x,{\bf T}), \frac{\p u(x,{\bf T})}{\p x}, \dots, \frac{\p^{3g-2} u(x,{\bf T})}{\p x^{3g-2}} \biggr)\,, \quad g\ge1\,,
\eeq 
where 
\beq
u(x,{\bf T}) \:= - 4 \, \frac{\p^2 \F_0(x,{\bf T})}{\p x^2} \, . 
\eeq
(This representation was given in the Proposition~5 of~\cite{YZ22}). 
Recall from~\cite{YZ22} that the power series $u(x,{\bf T})$ is related to $v(\bt)$ by 
\beq\label{uvequal}
u(x,{\bf T}) \= v({\bf t}(x,{\bf T}))\,.
\eeq
Now let 
\beq\label{def:y}
y=y(x,{\bf T}) \:= \frac{\p^2 \F_0(x,{\bf T})}{\p T_1^2} 
\=\frac{\p^2 \F^{\rm gBGW}_0(x,{\bf T})}{\p T_1^2}\,.
\eeq
According to~\cite{YZ22} we know that 
\beq\label{relationsyu}
y\= e^{-u} \,, \quad 
u_x \= -\frac{y_x}y \,, \quad y_x \= - \frac{y_{T_1}}{2y^{1/2}} \,.
\eeq
Using~\eqref{relationsyu} and~\eqref{jetfgx}, we find that there exist functions $\bar F_g(z_0,z_1,\dots,z_{3g-2})$, $g\ge1$, such that
\beq\label{jetfgT1}
\F_{g}(x,{\bf T}) \= \bar F_g\bigg(y(x,{\bf T}), \frac{\p y(x,{\bf T})}{\p T_1},\dots,\frac{\p^{3g-2} y(x,{\bf T})}{\p T_1^{3g-2}}\bigg) \,,\quad g\ge1\,.
\eeq
These representations will be used in the next section. 

\section{Proof of the Okuyama--Sakai conjecture}\label{section3}
In this section we prove the Okuyama--Sakai conjecture by applying the Dubrovin--Zhang theory~\cite{DZ-norm} to the KdV hierarchy. 

Before proving Theorem~\ref{OSthm}, let us first recall the following well-known lemma.
\begin{lemma}\label{lem:eqns-wk}
For $g\geq 1$ the function $F^{\rm WK}_{g}=F^{\rm WK}_{g}(z_1,\dots,z_{3g-2})$ satisfies the following equations:
\begin{align}
\sum_{k\geq 1} \frac{k+2}{2} z_{k}\frac{\p F^{\rm WK}_{g}}{\p z_k}
\=&\frac{\delta_{g,1}}{16},\\
\sum_{k\geq 1}k z_k \frac{\p F^{\rm WK}_{g}}{\p z_k}
\=&(2g-2)F_{g}\+\frac{\delta_{g,1}}{24}.
\end{align}
\end{lemma}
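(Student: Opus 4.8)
\textbf{Proof proposal for Lemma~\ref{lem:eqns-wk}.}
The plan is to derive the two equations from the two classical scaling symmetries of the Witten--Kontsevich free energy, which come from the string equation and the dilaton equation of $\F^{\rm WK}$. Recall that $\F^{\rm WK}(\bt;\e)=\sum_{g\ge0}\e^{2g-2}\F^{\rm WK}_g(\bt)$ is a tau-function of the KdV hierarchy characterized by Virasoro constraints; in particular, $L_{-1}Z_{\rm WK}=0$ (string equation) and $L_0 Z_{\rm WK}=0$ (dilaton equation) give, at the level of the free energy,
\begin{align}
& \frac{\p\F^{\rm WK}}{\p t_0}\=\frac{1}{2\e^2}t_0^2\+\sum_{i\ge1}t_i\frac{\p\F^{\rm WK}}{\p t_{i-1}}\,,\label{propstring}\\
& \frac{\p\F^{\rm WK}}{\p t_1}\=\frac1{24}\+\sum_{i\ge0}\frac{2i+1}{3}t_i\frac{\p\F^{\rm WK}}{\p t_i}\,.\label{propdilaton}
\end{align}
First I would specialize these at the "small phase space" point $t_i=0$ for $i\ge1$, $t_0=$ generic, and more usefully differentiate twice in $t_0$ so as to pass to $v=\p_{t_0}^2\F^{\rm WK}_0$ and the higher $t_0$-derivatives $z_k=\p_{t_0}^k v$, which are the jet variables appearing in \eqref{jstwk2}. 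Under this restriction the genus-$g$ part $\F^{\rm WK}_g$ becomes the function $F^{\rm WK}_g(z_1,\dots,z_{3g-2})$, and each $t_i$-derivative for $i\ge1$ is converted, via the KdV flows (equivalently via the chain rule together with the known genus-zero relation $\p_{t_i}v=\p_{t_0}(\text{something in }v)$ on the small phase space), into a differential operator in the $z_k$'s.

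The key computational step is therefore to track how the two linear-in-$t$ Euler-type operators on the left of \eqref{propstring}--\eqref{propdilaton} act after the restriction. Concretely: the dilaton operator $\sum_i \tfrac{2i+1}{3} t_i\p_{t_i}$ assigns degree $\tfrac{2i+1}{3}$ to $t_i$; translating the resulting homogeneity into the jet variables, where $z_k=\p_{t_0}^k v$ carries a definite weight, yields exactly the weighted Euler operator $\sum_{k\ge1}\tfrac{k+2}{2}z_k\p_{z_k}$ acting on $F^{\rm WK}_g$, and the constant $\tfrac1{24}$ survives only in genus one, producing the right-hand side $\tfrac{\delta_{g,1}}{16}$ after accounting for the $\p_{t_0}^2$ we applied. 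Similarly the dilaton equation in its "Euler in $t$" form $\sum_i t_i\p_{t_i}\F^{\rm WK}=(2-2g+\cdots)$ — more precisely the standard dilaton scaling $\bigl(\sum_i t_i\p_{t_i}-\p_{t_1}\bigr)\F^{\rm WK}_g=(2g-2)\F^{\rm WK}_g$ — after the same restriction becomes $\sum_{k\ge1}k z_k\p_{z_k}F^{\rm WK}_g=(2g-2)F^{\rm WK}_g+\tfrac{\delta_{g,1}}{24}$, the $\tfrac1{24}$ again coming from the genus-one anomaly. I would organize the proof by first writing both Virasoro/string-dilaton identities at the level of $\F^{\rm WK}$, then restricting to $t_{\ge1}=0$, then invoking the existence of the jet representation \eqref{jstwk1-1}--\eqref{jstwk2} to rewrite everything in terms of $F^{\rm WK}_g$ and $z_1,\dots,z_{3g-2}$.

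The main obstacle I anticipate is bookkeeping the weight assignments consistently: one must fix the grading in which $z_k$ has weight (say) $k$ or $\tfrac{2k+1}{?}$, check that $\F^{\rm WK}_g$ has the correct total weight $2g-2$ (equivalently $3g-3$ in the $\psi$-degree normalization), and verify that the two inhomogeneous constants land as $\tfrac{\delta_{g,1}}{16}$ and $\tfrac{\delta_{g,1}}{24}$ rather than some other multiples — the factor $\tfrac12$ versus $\tfrac{2i+1}{3}$ discrepancy between the raw dilaton operator and the stated operator $\sum\tfrac{k+2}{2}z_k\p_{z_k}$ is exactly the place where an off-by-a-constant error is easy to make. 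Since the lemma is labelled "well-known," I would keep this short: cite the standard derivation of string/dilaton for $\F^{\rm WK}$ (e.g.~\cite{DZ-norm, W}), restrict to the small phase space, and present the two resulting jet-variable identities with the weight computation done once and carefully.
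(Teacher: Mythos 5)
Your strategy coincides with the paper's: push the two linear constraints on $Z_{\rm WK}$ through the jet-variable representation \eqref{jstwk1-1}--\eqref{jstwk2} using the genus-zero flows $\p_{t_i}v=\tfrac{v^i}{i!}\p_{t_0}v$, which is exactly how the paper deduces the lemma from \eqref{stringeq} and \eqref{dilatoneq}. Two small corrections to your bookkeeping: your second displayed equation is the $L_0$ (scaling) Virasoro constraint rather than the dilaton equation and it yields the first identity, whereas the second identity requires the genuine dilaton equation (its $\e\,\p_\e$ term is what produces the factor $(2g-2)F_g$); also, your displayed string equation $L_{-1}$ is never actually used, and no restriction to the small phase space should be made, since the jet representation holds identically in $\bf t$ and one needs the $z_k$ to remain independent variables.
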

\begin{proof}
It is well known that $Z_{\rm WK}({\bf t};\epsilon)$ satisfies the following two equations:
\begin{align}
& \sum_{i\geq 0} \frac{2i+1}{2}t_{i} \frac{\p Z_{\rm WK}({\bf t}; \e)}{\p t_i}\+ \frac{1}{16} Z_{\rm WK}({\bf t}; \e) \= \frac{3}{2}\frac{\p Z_{\rm WK}({\bf t}; \e)}{\p t_1} \, . \label{stringeq}\\
& \sum_{i\geq 0} t_i \, \frac{\p Z_{\rm WK}({\bf t}; \e)}{\p t_i} \+ \e \, \frac{\p Z_{\rm WK}}{\p \e} \+ \frac1{24} \, Z_{\rm WK}({\bf t}; \e) 
\= \frac{\p Z_{\rm WK}({\bf t}; \e)}{\p t_1} \,,  \label{dilatoneq}
\end{align}
The lemma is proved by using \eqref{wkpart}, \eqref{jstwk1-1}, \eqref{jstwk1-2}, \eqref{jstwk2}, \eqref{stringeq}, \eqref{dilatoneq}, and 
\beq
\frac{\p v({\bf t})}{\p t_i}\=\frac{v({\bf t})^i}{i!}\frac{\p v({\bf t})}{\p t_0}\,,\quad i\geq 0\,.
\eeq
\end{proof}
\begin{proof}[Proof of Theorem~\ref{OSthm}]
Let us first recall some properties of the genus $0$ part of the generalized BGW free energy.
Following~\cite{YZ22} (see (112) therein), introduce 
\beq
Q\=Q(x,{\bf T})\=\exp\Big({-\frac{u(x,{\bf T})}{2}}\Big)\,.
\eeq
Then $Q$ has~\cite{YZ22} the following properties:
\begin{align}
&\frac{\p Q}{\p T_{2a+1}} \= - \, \frac{2}{a!} \, Q^{2a+1} \, \frac{\p Q}{\p x} \,, \quad a\geq 0\,,\label{eqn:flow-Q}\\
&Q(x,{\bf 0}) \= - \, \frac{x}{2}\,.  \label{eqn:initial-Q}
\end{align}
Alternatively,  the power series $Q$ can be uniquely determined by the following equation~\cite{YZ22}:
	\beq\label{eqn:euler-lagrange-Q}
	Q \= - \, \frac{x}{2} \+ \sum_{a\geq0} \, {T}_{2a+1} \, \frac{Q^{2a+1}}{a!} \,.
	\eeq
Moreover, the genus $0$ free energy $\mathcal F_0(x,\bf T)$ of generalized BGW model satisfies the following equations~\cite{YZ22}:
\begin{align}
	\frac{\p^2 \F_0(x,\bbT) }{\p T_{2a+1} \p T_{2b+1} }
		& \= \frac{Q^{2a+2b+2}}{a!\, b!\,(a+b+1)}\,, \label{Q1} \\
	\frac{\p^2\F_0(x,\bbT)}{\p x \p T_{2b+1} } & 
\= - \, \frac{Q^{2b+1}}{b! \, (2 b+1)}\,, \label{Q2} \\
	\frac{\p^2\F_0(x,\bbT)}{\p x \p x} & \= \frac12\, \log Q \,. \label{Q3}
	\end{align}
Here $a,b\geq 0$. It has also been proved in~\cite{YZ22} that the genus zero part of the generalized BGW free energy $\F_0(x, \bbT)$ has the expression
\begin{align}\label{dubrovin-f0}
\F_0(x, \bbT) \= & \frac12 \, \sum_{a,b\geq0} \, \widetilde {T}_{2a+1} \, \widetilde {T}_{2b+1} \, \frac{Q^{2a+2b+2}}{a!\,b!\,(a+b+1)}
		\,-\, x \, \sum_{b\geq0} \, \widetilde {T}_{2b+1} \, \frac{Q^{2b+1}}{b!\,(2b+1)}  \+  \frac{x^2}{4} \, \log Q \,.
\end{align}

By using \eqref{def:y} and \eqref{Q1}, we find $y=Q^2$. 
Then by using \eqref{eqn:flow-Q} and \eqref{eqn:euler-lagrange-Q}, we have
\begin{align}
\frac{\p y}{\p T_{2a+1}}
&\=\frac{y^a}{a!}\,\frac{\p y}{\p T_1}\,,\quad a\geq 0\,,
\label{eqn:flow-v}
\end{align}
and
\beq
y|_{T_3=T_5=\cdots=0}\=\frac{x^2}{4\,(1-T_1)^2}\,.
\eeq
From~\eqref{Q1} (again noticing that $y=Q^2$), we know that $e^{\hbar^{-2}\F_{0}(x,{\bf T})}$ is the tau-function of the solution $y$ to the dispersionless KdV hierarchy~\eqref{eqn:flow-v}.

Define 
\beq
U(x, {\bf T};\hbar) \:= y(x, {\bf T}) \+ 
\sum_{g\geq 1} \hbar^{2g} \,\frac{\p^2 F_g^{\rm WK}\bigl(\frac{\p y(x, {\bf T})}{\p T_1},\dots,\frac{\p^{3g-2}y(x, {\bf T})}{\p T_1^{3g-2}}\bigr)}{\p T_1^2}.
\eeq
According to Dubrovin and Zhang~\cite{DZ-norm}, the power series~$U(x, {\bf T};\hbar)$ is a particular solution to the KdV hierarchy:
\beq\label{eqn:KdV}
\frac{\p U}{\p T_{2i+1}}\=\frac{1}{(2i+1)!!}\Big[\big(L^{\frac{2i+1}{2}}\big)_{+},U\Big],\quad i\geq0\,,
\eeq 
where $L=\hbar^2\p^2_{T_1}+2U$ is the Lax operator of the KdV hierarchy (cf. e.g.~\cite{Dickey}).
Moreover, the power series $\tau (x,{\bf T};\hbar)$, defined by
\beq\label{defitau10}
\tau (x,{\bf T};\hbar) \:=\exp\biggl(\hbar^{-2}\mathcal F_0(x, {\bf T})+\sum_{g\geq 1}\hbar^{2g-2}F_g^{\rm KW}\biggl(\frac{\p y(x, {\bf T})}{\p T_1},\dots,\frac{\p^{3g-2}y(x, {\bf T})}{\p T_1^{3g-2}}\biggr)\biggr) \,,
\eeq
is the tau-function of the solution $U(x, {\bf T};\hbar)$ to the KdV hierarchy~\cite{DZ-norm}. 
The conjectural identies~\eqref{eqn:bgwf1-wkjet}, \eqref{eqn:kw-bgwjet} are now equivalent to
\beq
Z(x,{\bf T};\hbar)\= 2^{-\frac{1}{24}} \, \tau(x,{\bf T};\hbar)\,.
\eeq

It is known (\cite{A,YZ22}) that $Z(x, {\bf T}; \hbar)$ satisfies the following equation:
\begin{align}\label{eqn:string-gBGW}
L_0 \bigl(Z(x, {\bf T}; \hbar)\bigr) \= 0\,,
\end{align}
where
\begin{align}
L_0\=&\sum_{a\geq 0}\frac{2a+1}{2}\widetilde T_{2a+1}\frac{\p}{\p T_{2a+1}}
\+\frac{1}{16}+\frac{x^2}{8\hbar^2}\,.
\end{align}

By equation~\eqref{eqn:string-gBGW} and \eqref{def:Z}, \eqref{eqn:genus-expan}, $\mathcal F_{0}(x,{\bf T})$ satisfies the following equation
\beq\label{eqn:string-gBGW-F0}
\sum_{a\geq 0}\frac{2a+1}{2}\widetilde{T}_{2a+1}\frac{\p \F_{0}(x,{\bf T})}{\p T_{2a+1}}
\+\frac{x^2}{8}\=0.
\eeq
By repeatedly taking derivatives of the above equation with respect to $T_1$
and by using the commutation relation
\beq
\Big[\frac{\p}{\p T_1},L_0\Big]\=\frac{1}{2}\,\frac{\p}{\p T_1}\,,
\eeq
we get
\beq
\sum_{a\geq 0}\frac{2a+1}{2}\widetilde{T}_{2a+1}\frac{\p }{\p T_{2a+1}}\bigg(\frac{\p^k y}{\p T_1^k}\bigg)\=-\frac{k+2}{2}\frac{\p^k y}{\p T_1^k}\,,\quad k\geq 0\,.
\eeq
Therefore, 
\beq
\sum_{a\geq 0}\frac{2a+1}{2}\widetilde{T}_{2a+1}\frac{\p F_g^{\rm KW}\bigl(\frac{\p y}{\p T_1},\dots,\frac{\p^{3g-2}y}{\p T_1^{3g-2}}\bigr)}{\p T_{2a+1}}
\= \Biggl(-\sum_{k=1}^{3g-2}\frac{k+2}{2} z_k\frac{\p F_g^{\rm WK}}{\p z_k}\Biggr)\biggl(\frac{\p y}{\p T_1},\dots,\frac{\p^{3g-2}y}{\p T_1^{3g-2}}\biggr)\,.
\eeq
Together with Lemma~\ref{lem:eqns-wk}, we arrive at
\beq\label{eqn:string-gBGW-Fg}
\sum_{a\geq 0}\frac{2a+1}{2}\widetilde{T}_{2a+1}\frac{\p F_g^{\rm KW}\bigl(\frac{\p y}{\p T_1},\dots,\frac{\p^{3g-2}y}{\p T_1^{3g-2}}\bigr)}{\p T_{2a+1}}
\+\frac{\delta_{g,1}}{16}
\=0\,\quad g\geq 1\,.
\eeq
Hence
\begin{align}
L_0 \big(\tau(x,{\bf T};\hbar)\big)\=0\,.\label{eqn:string-gBGW-jet}
\end{align}

It is also known (\cite{A,YZ22}) that $Z(x, {\bf T}; \hbar)$ satisfies the following dilaton equation:
\begin{align}\label{eqn:dilaton-gBGW}
L_{\rm dilaton}\bigl(Z(x, {\bf T}; \hbar)\bigr) \= 0\,,
\end{align}
where
\begin{align}
L_{\rm dilaton} \= &\sum_{a\geq 0}\,\widetilde T_{2a+1}\,\frac{\p}{\p T_{2a+1}}\+x\,\frac{\p}{\p x}\+\hbar\,\frac{\p }{\p \hbar}\+\frac{1}{24} \,.
\end{align}

By equation~\eqref{eqn:dilaton-gBGW} and \eqref{def:Z}, \eqref{eqn:genus-expan}, $\mathcal F_{0}(x,{\bf T})$ satisfies the following equation
\beq\label{eqn:dilaton-gBGW-F0}
\sum_{a\geq 0}\widetilde{T}_{2a+1}\frac{\p \F_{0}(x,{\bf T})}{\p T_{2a+1}}
\+x\,\frac{\p \F_{0}(x,{\bf T})}{\p x}\=2\, \F_{0}(x,{\bf T}).
\eeq
Like the above, using the commutation relation
\beq
\Big[\frac{\p}{\p T_1},L_{\rm dilaton}\Big]\=\frac{\p}{\p T_1}\,,
\eeq
we get
\beq
\sum_{a\geq 0}\widetilde{T}_{2a+1}\frac{\p }{\p T_{2a+1}}\bigg(\frac{\p^k y}{\p T_1^k}\bigg)
\+x\frac{\p }{\p x}\bigg(\frac{\p^k y}{\p T_1^k}\bigg)
\=-k\,\frac{\p^k y}{\p T_1^k}\,,\quad k\geq 0\,.
\eeq
Together with Lemma~\ref{lem:eqns-wk}, we obtain 
\begin{align}
L_{\rm dilaton}\big(\tau(x,{\bf T};\hbar)\big)\=0\,.\label{eqn:dilaton-gBGW-jet}
\end{align}

Denote
\beq\label{defiwtf310}
\log(\tau (x,{\bf T};\hbar))
\,=:\, \widetilde{\mathcal F}(x,{\bf T};\hbar)
\,=:\, \sum_{g\geq 0}\hbar^{2g-2}\, \widetilde{\mathcal F}_g(x,{\bf T}) \, .
\eeq
By equation~\eqref{eqn:string-gBGW} and equation~\eqref{eqn:string-gBGW-jet}, set $T_{2a+1}=0$, $a=1,2,\cdots$,
we have
\beq\label{eqn:tildeF-pT}
\hbar^2\frac{\p \widetilde{\mathcal F}}{\p T_1}(x,T_1, 0, 0, \dots;\hbar)
\=\hbar^2\frac{\p {\mathcal F}}{\p T_1}(x,T_1, 0, 0, \dots;\hbar)
\=\frac{1}{1-T_1} \biggl(\frac{\hbar^2}{8}+\frac{x^2}{4}\biggr)\,.
\eeq
Taking the derivative with respective to $T_1$ in the above equation, we get
\beq
U(x,T_1, 0, 0, \dots;\hbar)
\=U_{\rm gBGW}(x,T_1, 0, 0, \dots;\hbar)
\=\frac{1}{(1-T_1)^2} \biggl(\frac{\hbar^2}{8}+\frac{x^2}{4}\biggr)\,,
\eeq
where $U_{\rm gBGW}(x,{\bf T};\hbar):=\hbar^2\frac{\p^2 {\mathcal F}}{\p T_1^2}(x,{\bf T};\hbar)$.
So we have proved that these two power series, $U(x,{\bf T};\hbar)$ and $U_{\rm gBGW}(x,{\bf T};\hbar)$, 
both satisfy the KdV hierarchy and have the same initial value,
thus by uniqueness of the solution to the KdV hierarchy, we have 
\beq
U(x,{\bf T};\hbar)=U_{\rm gBGW}(x,{\bf T};\hbar).
\eeq

It follows that $\F(x,{\bf T}; \e)$ and $\widetilde{\mathcal F}(x,{\bf T};\e)$ 
could only differ by an affine function of $T_1, T_2, \dots$ (the coefficients can depend on $x$ and~$\e$). 
From~\eqref{eqn:string-gBGW} and~\eqref{eqn:string-gBGW-jet} we know that 
this affine function could only be a constant with respect to $T_1, T_2, \dots$, that is a function of $x$, $\e$. 
Combining with the genus expansions~\eqref{eqn:genus-expan} and~\eqref{defiwtf310}, we can now write 
\beq
\F(x,{\bf T}; \e) - \widetilde{\mathcal F}(x,{\bf T};\e) \, = : \, \sum_{g\geq0} \e^{2g-2} \mathcal{K}_g(x) \,.
\eeq
We are left to show that the functions $\mathcal{K}_g(x)$, $g\ge0$, all vanish. 
Using \eqref{defitau10}, \eqref{defiwtf310}, \eqref{jetgenus1}, \eqref{uvequal}, \eqref{eqn:genus-expan}, \eqref{jstwk1-2}, we find that this is true for $g=0,1$. 
Using equation~\eqref{eqn:dilaton-gBGW} and equation~\eqref{eqn:dilaton-gBGW-jet}, 
we know that $\mathcal{K}_g(x)$ for $g\ge2$ must have the form 
\beq
\mathcal{K}_g(x) \= \frac{c_g}{x^{2g-2}}\,, \quad g\ge2\,.
\eeq
Because of \eqref{jetfgT1} and \eqref{defitau10}, \eqref{defiwtf310}, we know that for each $g\ge2$ 
there exists a function $K_g(z_0, z_1,\dots,z_{3g-2})\in\CC[z_0^{\pm1},z_1^{\pm1}][z_2,\dots,z_{3g-2}]$ such that 
\beq\label{kkequal}
\mathcal{K}_g(x) \=  K_g\biggl(y(x,{\bf T}), \frac{\p y(x, {\bf T})}{\p T_1},\dots,\frac{\p^{3g-2}y(x, {\bf T})}{\p T_1^{3g-2}}\biggr) \,,
\eeq 
where $y(x,{\bf T})$ is given by~\eqref{def:y}.
Like in the first arXiv preprint version of~\cite{DLYZ2}, 
taking derivatives with respect to $T_{2m+1}$, $m\ge0$, on both sides of~\eqref{kkequal}, using~\eqref{eqn:flow-v},
and dividing both sides by $y(x,{\bf T})^m$, we find that the right-hand side becomes a polynomial of~$m$, 
which vanishes {\it identically} in~$m$. 
Comparing the coefficients of powers of~$m$ from the highest degree to the lowest degree we obtain 
\beq\label{vanishingfirstderiv}
\frac{\p K_g}{\p z_k}\biggl(y(x,{\bf T}), \frac{\p y(x, {\bf T})}{\p T_1},\dots,\frac{\p^{3g-2}y(x, {\bf T})}{\p T_1^{3g-2}}\biggr) \= 0 \,, \quad 
k=0,\dots,3g-2\,.
\eeq
This, by an elementary exercise, leads to the conclusion that for each $g\ge2$ the Laurent polynomial 
$K_g(z_0, z_1,\dots,z_{3g-2})$ must be a pure constant. Therefore, $c_g=0$.
The theorem is proved.
\end{proof}

We note that it is shown in~\cite{YZ22} that 
Theorem~\ref{OSthm} implies Kazarian--Norbury's conjectural identity~\cite{KN} for kappa class integrals.

\medskip

\noindent {\it Note added:}  
After the first version of this paper appeared on arXiv,
 Chekhov kindly communicated to us the paper~\cite{AC}, where 
 the so-called Born--Infeld (NBI) matrix model was considered which is very 
 similar to the generalized BGW model. 
 It is shown in~\cite{AC} that the partition function of the NBI model can be identified with the Witten--Kontsevich 
 tau-function by shifting times (by constants) in the power-series ring, therefore with 
 the partition function of certain kappa class integrals (see e.g.~\cite{BDY} and the references therein). However, the corrected/normalized 
 partition function of the generalized BGW model is a power series of times 
 which cannot be obtained by shifting times by constants from the Witten--Kontsevich 
 tau-function. A simple way to see this is that the Virasoro constraints for the NBI model start from~$L_{-1}$ while 
 for the generalized BGW model they start from~$L_0$. Nevertheless, it seems to us that the tau-function 
 given from the viewpoint of~\cite{DYZ} for the KdV hierarchy can unify the two models (see the right part of (115) in~\cite{DYZ}). 
 We study the precise relation of these two models in subsequent publications.

\medskip
\medskip

\noindent Di Yang

\noindent School of Mathematical Sciences, University of Science and Technology of China,

\noindent Hefei 230026, P.R. China 

\noindent diyang@ustc.edu.cn

\medskip

\noindent Qingsheng Zhang

\noindent School of Mathematical Sciences, Peking University, 

\noindent Beijing 100871, P.R. China 

\noindent zqs@math.pku.edu.cn


\begin{thebibliography}{99}

\bibitem{AC}
Ambj{\o}rn, J, Chekhov, L.: The NBI matrix model of IIB superstrings. J. High Energy Phys.~{\bf 1998}, Paper No.~7, 13~pp.

\bibitem{A}
A.~Alexandrov, Cut-and-join description of generalized Brezin-Gross-Witten model. 
Adv. Theor. Math. Phys.~{\bf 22} (2018), 1347--1399. 

\bibitem{A1}
A.~Alexandrov, Intersection numbers on $\overline{\mathcal{M}}_{g,n}$ and BKP hierarchy. 
J. High Energy Phys.~{\bf 2021}, Paper No. 013, 14~pp.

\bibitem{A2}
A.~Alexandrov, KdV solves BKP. Proceedings of the National Academy of Science~{\bf 118} (2021), Paper No.~e2101917118, 2~pp.

\bibitem{A0}
A.~Alexandrov, Generalized Br\'ezin--Gross--Witten tau-function as a hypergeometric solution of the BKP hierarchy. 
Adv. Math.~{\bf 412} (2023), Paper No.~108809, 36~pp.

\bibitem{BDY}
M.~Bertola, B.~Dubrovin, D.~Yang, Correlation functions of the KdV hierarchy and applications to intersection numbers 
over $\overline{\mathcal{M}}_{g,n}$. Phys. D~{\bf 327} (2016), 30--57.

\bibitem{BR}
M.~Bertola, G.~Ruzza, Brezin-Gross-Witten tau function and isomonodromic deformations, 
Commun. Number Theory Phys.~{\bf 13} (2019), 827--883.

\bibitem{BG}
E.~Brezin, D.~J.~Gross, The external field problem in the large N limit of QCD, Phys. Lett. B~{\bf 97} (1980), 120--124.

\bibitem{CGG}
N.~K.~Chidambaram, E.~Garcia-Failde, A.~Giacchetto, 
Relations on $\overline {\mathcal {M}} _ {g,n} $ and the negative $r$-spin Witten conjecture. arXiv: 2205.15621.

\bibitem{DM}
P.~Deligne, D.~Mumford, The irreducibility of the space of curves of given genus. Publications Math\'ematiques de l'Institut des Hautes \'Etudes Scientifiques~{\bf 36} (1969), 75--109.

\bibitem{Dickey}
L.~A.~Dickey, Soliton equations and Hamiltonian systems. Second edition. Advanced Series in Mathematical Physics,~{\bf 26}. 
World Scientific Publishing Co., Inc., River Edge, NJ, 2003. xii+408~pp.

\bibitem{DVV}
R.~Dijkgraaf, H.~Verlinde, E.~Verlinde, Loop equations and Virasoro constraints in non-perturbative two-dimensional quantum gravity. 
Nucl. Phys. B~{\bf 348} (1991), 435--456.

\bibitem{DW}
R.~Dijkgraaf, E.~Witten, Mean field theory, topological field theory, and multi-matrix models. Nucl. Phys. B~{\bf 342} (1990), 486--522.

\bibitem{DN}
N.~Do, P.~Norbury, Topological recursion on the Bessel curve. Commun. Number Theory Phys.~{\bf 12} (2018), 53--73.

\bibitem{Du1}
B.~Dubrovin, Geometry of 2D topological field theories. In {\it ``Integrable Systems and Quantum Groups"} 
(Montecatini Terme, 1993), Editors: Francaviglia, M., Greco, S.. 
Springer Lecture Notes in Math., \textbf{1620} (1996), 120--348.

\bibitem{Dubrovinuniversal}
B.~Dubrovin, 
On Hamiltonian perturbations of hyperbolic systems of conservation laws. II. Universality of critical behaviour. 
Comm. Math. Phys.~{\bf 267} (2006), 117--139.

\bibitem{DLYZ1}
B.~Dubrovin, S.-Q.~Liu, D.~Yang, Y.~Zhang, Hodge integrals and tau symmetric integrable hierarchies of Hamiltonian evolutionary PDEs, 
Advances in Mathematics~{\bf 293} (2016), 382--435.

\bibitem{DLYZ2}
B.~Dubrovin, S.-Q.~Liu, D.~Yang, Y.~Zhang, 
Hodge-GUE correspondence and the discrete KdV equation. 
Comm. Math. Phys.~{\bf 379} (2020), 461--490. 

\bibitem{DY1}
B.~Dubrovin, D.~Yang, On cubic Hodge integrals and random matrices. Commun. Number Theory Phys.~{\bf 11} (2017), 311--336.

\bibitem{DY2}
B.~Dubrovin, D.~Yang, 
Remarks on intersection numbers and integrable hierarchies. I. Quasi-triviality. Adv. Theor. Math. Phys.~{\bf 24} (2020), 1055--1085.

\bibitem{DYZ}
B.~Dubrovin, D.~Yang, D.~Zagier, On tau-functions for the KdV hierarchy. Selecta Math.~{\bf 27} (2021), Paper No.~12, 47~pp.

\bibitem{DZ-norm}
B.~Dubrovin, Y.~Zhang, Normal Forms of Hierarchies of Integrable PDEs, Frobenius Manifolds and Gromov--Witten Invariants, 
arXiv:math/0108160.

\bibitem{EYY}
T.~Eguchi, Y.~Yamada, S.~K.~Yang, 
On the genus expansion in the topological string theory. Reviews in Mathematical Physics, {\bf 7} (1995), 279--309.

\bibitem{Fu}
Z.~Fu, Remark on the affine coordinates for KdV tau-functions (In preparation).

\bibitem{GV} 
R.~Gopakumar, C.~Vafa, 
On the gauge theory/geometry correspondence. Adv. Theor. Math. Phys.~{\bf 5} (1999), 1415--1443.

\bibitem{GJV}
I.~P.~Goulden, D.~M.~Jackson, R.~Vakil, 
The Gromov--Witten potential of a point, Hurwitz numbers, and Hodge integrals. 
Proc. London Math. Soc.~(3) {\bf 83} (2001), 563--581.

\bibitem{GN}
D.~J.~Gross, M.~J.~Newman, Unitary and Hermitian matrices in 
an external field. 2: The Kontsevich model and continuum Virasoro constraints, Nucl. Phys. B~{\bf 380} (1992), 168--180.

\bibitem{GW}
D.~J.~Gross, E.~Witten, Possible third order phase transition in the large N lattice gauge theory, Phys. Rev. D~{\bf 21} (1980), 446--453.

\bibitem{IZ}
C.~Itzykson, J.~B.~Zuber, Combinatorics of the modular group. II. The Kontsevich integrals, Int. J. Mod. Phys.
A.~{\bf 7} (1992), 5661--5705.

\bibitem{KMMM}
S. Kharchev, A. Marshakov, A. Mironov, and A. Morozov, 
Generalized Kontsevich model versus Toda hierarchy and discrete matrix models, Nucl. Phys. B~{\bf 397} (1993), 339--378.

\bibitem{KN}
M.~Kazarian, P.~Norbury, Polynomial relations among kappa classes on the moduli space of curves. arXiv:2112.11672. 

\bibitem{K}
M.~Kontsevich, Intersection theory on the moduli space of curves and the matrix Airy function. Comm. Math. Phys.~{\bf 147} (1992), 1--23.

\bibitem{LLZ1}
C.-C.M.~Liu, K.~Liu, J.~Zhou,  A proof of a conjecture of Mari\~no--Vafa on Hodge integrals. J. Diff. Geom. {\bf 65} (2003), 289--340.

\bibitem{LLZ2}
C.-C.M.~Liu, K.~Liu, J.~Zhou,  A formula of two-partition Hodge integrals. J. Amer. Math. Soc. {\bf 20} (2007), 149--184.

\bibitem{LY}
X.~Liu, C.~Yang, $Q$-Polynomial expansion for Brezin-Gross-Witten tau-function. Adv. Math.~{\bf 404} (2022), Paper No.~108456.

\bibitem{MV}
M.~Mari\~no, C.~Vafa, Framed knots at large N. Contemporary Mathematics {\bf 310} (2002), 185--204.

\bibitem{MMS}
A.~Mironov, A.~Morozov, G.~W.~Semenoff, Unitary matrix integrals in the framework of generalized Kontsevich model. 
1. Brezin-Gross-Witten model, Int. J. Mod. Phys. A~{\bf 11} (1996), 5031--5080.

\bibitem{Mum}
D.~Mumford, Towards an enumerative geometry of the moduli space of curves. In: Arithmetic and geometry  (pp. 271-328). Birkh\"auser, Boston, MA, 1983.

\bibitem{N}
P.~Norbury, A new cohomology class on the moduli space of curves. arXiv:1712.03662v4.

\bibitem{OP}
A.~Okounkov, R.~Pandharipande, Hodge integrals and invariants of the unknot, Geometry \& Topology~{\bf 8} (2004), 675--699.


\bibitem{OS}
K.~Okuyama, K.~Sakai, JT supergravity and Brezin-Gross-Witten tau-function, J.~High Energy Phys.~{\bf 2020}, Paper No.~160.

\bibitem{WY}
Z.~Wang, C.~Yang, BKP hierarchy, affine coordinates, and a formula for connected bosonic n-point functions. 
Letters in Mathematical Physics {\bf 112} (2022), Paper No. 62.

\bibitem{WYZ}
Z.~Wang, C.~Yang, Q.~Zhang, BKP-Affine Coordinates and Emergent Geometry of Generalized Br\'ezin-Gross-Witten Tau-Functions. arXiv:2301.01131.

\bibitem{W}
E.~Witten,  Two-dimensional gravity and intersection theory on moduli space. In Surveys in Differential Geometry (Cambridge, MA, 1990), 
pp. 243--310. Lehigh Univ., Bethlehem, PA, 1991.

\bibitem{Y23}
D.~Yang, GUE via Frobenius Manifolds. I. From Matrix Gravity to Topological Gravity and Back. arXiv:2205.01618.

\bibitem{YZ22}
D.~Yang, Q.~Zhang, On the Hodge-BGW correspondence. arXiv:2112.12736.

\bibitem{YangZhou}
D.~Yang, J.~Zhou, Grothendieck's dessins d'enfants in a web of dualities. III. 
J. Phys. A-Math. Theor.~{\bf 56} (2023), Paper No.~055201, 34~pp.

\bibitem{ZZ}
Q.~Zhang, J.~Zhou, On Itzykson-Zuber ansatz. J.~High Energy Phys.~{\bf 2019}, Paper No.~075.

\bibitem{Zhoudessin}
J.~Zhou, Grothendieck's Dessins d'Enfants in a Web of Dualities. arXiv:1905.10773.
\end{thebibliography}
\end{document}